\newtheorem{mythm}{Theorem}
\newtheorem{mylma}{Lemma}
\newtheorem{mydef}{Definition}
\begin{document}

\title{Generalized Measures of Information Transfer}

\author{Paul L. Williams}
\affiliation{Cognitive Science Program and}
\author{Randall D. Beer}
\affiliation{Cognitive Science Program and}
 \affiliation{School of Informatics and Computing\\ Indiana University, Bloomington, Indiana 47406 USA}

\date{\today}

\begin{abstract}
Transfer entropy provides a general tool for analyzing the magnitudes and directions---but not the \emph{kinds}---of information transfer in a system.  We extend transfer entropy in two complementary ways.  First, we distinguish state-dependent from state-independent transfer, based on whether a source's influence depends on the state of the target. Second, for multiple sources, we distinguish between unique, redundant, and synergistic transfer.  The new measures are demonstrated on several systems that extend examples from previous literature.
\end{abstract}

\pacs{89.70.Cf,  02.50.-r,  05.45.Tp, 89.75.-k}

\maketitle

\section{Introduction}

Many scientific problems involve understanding the behavior of complex systems in terms of the interactions between their component parts.
Using information theory, these interactions can be quantified as the information exchanged between components \cite{Hlavackova2007,*Lungarella2007}.  Such an approach has the advantages that informational measures are sensitive to arbitrary nonlinear interactions between components and have units of measurement (bits) that are easily interpreted and compared across systems.  In particular, the information-theoretic measure \emph{transfer entropy} (TE) \cite{Schreiber2000,Kaiser2002} has become widely adopted as a standard measure of information transfer, with applications in neuroscience \cite{Staniek2008,*Honey2007,*Gourevitch2007}, cellular biology \cite{Pahle2008}, chaotic synchronization \cite{Hung2008,*Otsuka2002,*Palus2001}, and econophysics \cite{Kwon2008, *Marschinski2002} to name just a few. 
	
Transfer entropy provides a directional measure of the influence that one random process, the \emph{source}, has on another, the \emph{target}.  This influence is measured by the information that the source provides about the next state of the target when conditioned on the target's history.  The idea behind TE is that conditioning on the target's history removes the information shared by the source and target due to common histories or inputs, thereby isolating the information that is actually transferred. However, conditioning does not simply remove shared information; it also adds in higher-order synergistic information, an idea that was formalized in the recently proposed \emph{partial information (PI) decomposition} \cite{Williams2010}. 

Here we apply this basic property of conditional information to generalize TE in two complementary ways.  First, we decompose TE into two kinds of information transfer that differ regarding the influence of the target's state.  We show that the resulting measures are formally related to the control-theoretic concepts of open-loop and closed-loop control, and quantify separately the state-independent and state-dependent influences of the source onto the target.  Second, we apply a similar decomposition to the case of multiple sources and derive a novel multivariate generalization of TE.  The resulting measures quantify separately the unique, redundant, and synergistic influences of multiple sources onto a target.  Together these results provide a general framework for characterizing not only the magnitudes and directions but also the \emph{kinds} of information exchange that occur between random processes. 

We begin by introducing PI-decomposition for a system of three random variables, $X$, $Y$, and $Z$, for which we ask: How much total information do $Y$ and $Z$ provide about $X$? And, how do $Y$ and $Z$ contribute to the total information?  The answer to the former is given by the mutual information (MI)
\begin{equation}
I(X; Y, Z) = H(X) - H(X|Y, Z)
\end{equation}
where $H(X) = -\sum_x p(x) \log p(x)$ is the familiar Shannon entropy \cite{Cover2006}.  For the latter, we can identify three distinct possibilities, that is, three \emph{kinds} of information that $Y$ and $Z$ may provide.  First, $Y$ may provide information that $Z$ does not, or vice versa (\emph{unique information}).  For example, if $Y$ is a copy of $X$ and $Z$ is a degenerate random variable, then the total information reduces to the unique information from $Y$.  Second, $Y$ and $Z$ may provide the same or overlapping information (\emph{redundancy}).  For example, if $Y$ and $Z$ are both copies of $X$ then they redundantly provide complete information.  Third, the combination of $Y$ and $Z$ may provide information that is not available from either alone (\emph{synergy}).  The classic example for binary variables is the exclusive-OR function $X = Y \oplus Z$, in which case $Y$ and $Z$ individually provide no information but together provide complete information.  Thus, intuitively, $I(X;Y,Z)$ decomposes into unique information from $Y$ and $Z$, redundant information shared by $Y$ and $Z$, and synergistic information contributed jointly by $Y$ and $Z$.

PI-decomposition formalizes this idea, starting with a measure of redundancy.  Letting ${\bf R} = \{Y,Z\}$, redundancy is defined as
\begin{equation}\label{IminEq}
I_{\min}(X; Y, Z) = \sum_x p(x) \min_{R \in {\bf R}} I(X=x; R)
\end{equation}
where
\begin{equation}\label{SpecInfoEq}
I(X=x; R) = \sum_r p(r|x) \bigg[ \log \frac{1}{p(x)} - \log \frac{1}{p(x|r)} \bigg]
\end{equation}
is the specific information that $R$ provides about each state $X=x$.  Thus, redundancy is defined as the minimum information that $Y$ or $Z$ provides about each state of $X$, averaged over all possible states.  This definition captures the idea that redundancy is the information shared by $Y$ and $Z$ (the minimum that either provides) while taking into account that $Y$ and $Z$ may provide information about different states of $X$.

Using $I_{\min}$ and the inclusion-exclusion principle \cite{Stanley1997}, the total information $I(X; Y, Z)$ can then be decomposed into partial information terms, given by the PI-function $\Pi_{{\bf R}}$.  The redundancy is given by $\Pi_{{\bf R}}(X; \{Y\}\{Z\}) = I_{\min}(X; Y, Z)$.  The unique information from $Y$ is given by $\Pi_{{\bf R}}(X; \{Y\}) = I(X; Y) - I_{\min}(X; Y, Z)$, or the total information from $Y$ minus the redundancy, and likewise for $Z$.  Finally, the synergy is given by $\Pi_{{\bf R}}(X; \{Y,Z\}) = I(X; Y, Z) - I_{\max}(X; Y, Z)$, where $I_{\max}$ is defined the same as $I_{\min}$ except substituting $\max$ for $\min$.  Together these terms yield the decomposition:
\begin{align}
I(X; Y) & = \Pi_{{\bf R}}(X; \{Y\}) + \Pi_{{\bf R}}(X; \{Y\}\{Z\}) \label{MIDecompEq1} \\
& \mbox{and} \notag\\
I(X; Y, Z) & = \Pi_{{\bf R}}(X; \{Y\}) + \Pi_{{\bf R}}(X; \{Z\}) \notag \\
				& + \Pi_{{\bf R}}(X; \{Y\}\{Z\}) + \Pi_{{\bf R}}(X; \{Y,Z\}) \label{MIDecompEq2}.
\end{align}

\begin{figure}[t!]
  \centering
    \includegraphics[width=0.55\columnwidth]{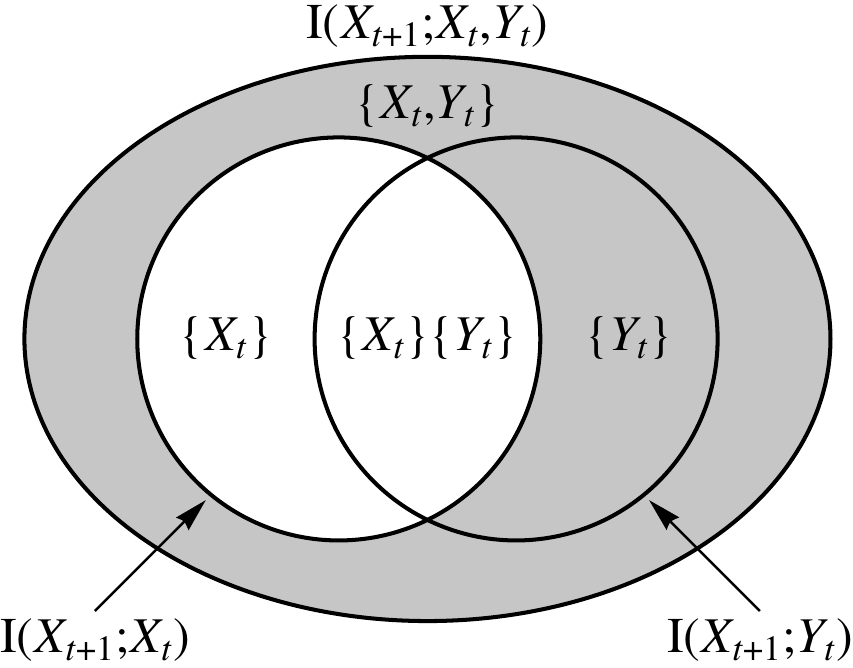} 
\caption{PI-decomposition of $I(X_{t+1}; X_t, Y_t)$ into unique information from $X_t$ ($\{X_t\}$) and $Y_t$ ($\{Y_t\}$), redundancy ($\{X_t\}\{Y_t\}$), and synergy ($\{X_t, Y_t\}$).  The gray region corresponds to $T_{Y \rightarrow X}$, which decomposes into SITE ($\{Y_t\}$) and SDTE ($\{X_t, Y_t\}$).}
\label{te-pi-decomposition}
\end{figure}

An immediate consequence of Eqs.~\eqref{MIDecompEq1} and \eqref{MIDecompEq2} is that the conditional MI $I(X;Y|Z)$ decomposes into the unique information from $Y$ \emph{plus} the synergy from $Y$ and $Z$:
\begin{align}\label{CondMIDecompEq}
I(X;Y|Z) & = I(X;Y,Z) - I(X;Z) \notag \\
	     & = \Pi_{{\bf R}}(X; \{Y\}) + \Pi_{{\bf R}}(X; \{Y,Z\}).
\end{align}
Thus, conditioning $I(X; Y)$ on $Z$ not only removes the redundancy from $Y$ and $Z$, but also adds in their synergy.  This observation makes intuitive sense if we think of $I(X;Y|Z)$ as answering the question: How much information do we gain from learning $Y$ when we already know $Z$?  Clearly, this will include both the information that comes uniquely from $Y$ plus the synergistic information that comes from $Y$ and $Z$ together.

Transfer entropy can be analyzed in a similar way, since it is simply an application of conditional MI to stochastic processes.  Given processes $X$ and $Y$, the TE from $Y$ to $X$ is defined as
\begin{equation}\label{TEEq}
T_{Y \rightarrow X} = I(X_{t+1}; Y_t^{(l)} | X_t^{(k)})
\end{equation}
where $X_t^{(k)}$ is the $k$-dimensional delay vector for $X$, and likewise for $Y_t^{(l)}$ and $Y$ (henceforth the superscripts are omitted for clarity).  In other words, $T_{Y \rightarrow X}$ quantifies the information that previous values of $Y$ provide about the next state of $X$ when conditioned on $X$'s own history.  In terms of transition probabilities, $T_{Y \rightarrow X}$ can also be thought of as quantifying deviation from the generalized Markov property $p(x_{t+1}|x_t, y_t) = p(x_{t+1}|x_t)$, with $T_{Y \rightarrow X} = 0$ iff $Y$ has no influence on the transitions of $X$.

\section{Decomposing Transfer Entropy}

Our first main result is that, by decomposing $T_{Y \rightarrow X}$, we can distinguish two kinds of information transfer; that is, two distinct ways that $Y$ can influence the transitions of $X$ (FIG.~\ref{te-pi-decomposition}). Letting ${\bf R} = \{ X_t , Y_t \}$ and combining Eqs.~\eqref{CondMIDecompEq} and \eqref{TEEq}, we have that
\begin{equation}\label{TEDecompEq}
T_{Y \rightarrow X} = \Pi_{{\bf R}}(X_{t+1}; \{ Y_t \}) + \Pi_{{\bf R}}(X_{i+1}; \{ X_t, Y_t \})
\end{equation}
where $\Pi_{{\bf R}}(X_{t+1}; \{ Y_t \})$ is the unique information that $Y_t$ provides about $X_{t+1}$ and $\Pi_{{\bf R}}(X_{t+1}; \{ X_t, Y_t \})$ is the synergistic information from $X_t$ and $Y_t$. As we will show, $\Pi_{{\bf R}}(X_{t+1}; \{ Y_t \})$ corresponds to \emph{state-independent transfer entropy} (SITE): it measures the portion of $Y_t$'s influence on $X_{t+1}$ that does not depend on $X_t$.  The complementary term $\Pi_{{\bf R}}(X_{t+1}; \{ X_t, Y_t \})$ is the \emph{state-dependent transfer entropy} (SDTE): it measures the influence that $Y_t$ has on $X_{t+1}$ only when combined with an appropriate state of $X_t$. To ground this interpretation, we next establish a formal connection between SITE and SDTE and the control-theoretic notions of open-loop and closed-loop control.

In control theory, one considers a process $X_t$---characterized by its initial state $X$ and final state $X'$---and a controller $C$, with the two related by a distribution $p(x'|x, c)$ \cite{Touchette2000,Touchette2004}. The aim is to specify a control policy, given by the distribution $p(c|x)$, that moves the system to certain desired final states.  In open-loop control, the controller $C$ acts independently of the initial state $X$ ($I(X;C) = 0$), while closed-loop control is characterized by state-dependent actuation.

A fundamental property of a control system is its \emph{controllability}, which is the extent to which it can be moved through its entire state space.  In particular, a system has perfect controllability iff there is a control policy that moves the system deterministically from any $x \in X$ to any $x' \in X'$. In \cite{Touchette2004}, it is shown that a natural information-theoretic measure of controllability is $I(X'; C|X)$---the \emph{information transfer} from the controller to the controlled process---which is maximal exactly in the case of perfect controllability.  Thus, there is a close parallel between information transfer and controllability, where essentially the only difference is semantic: information transfer applies to arbitrary interactions between processes, while controllability is concerned specifically with using one process to influence another.

With this in mind, the following result connects SITE and SDTE with open-loop and closed-loop control (proof in Appendix~\ref{app:proofs}).
\begin{mythm}
A system is perfectly controllable with open-loop control iff it is perfectly controllable with only state-independent transfer from $C$ to $X'$.
\end{mythm}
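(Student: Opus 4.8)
The plan is to work entirely with the PI-decomposition of the controllability measure. Setting ${\bf R} = \{X, C\}$ and reading $I(X'; C|X)$ as a transfer entropy, Eq.~\eqref{CondMIDecompEq} gives
\begin{equation}
I(X'; C|X) = \Pi_{{\bf R}}(X'; \{C\}) + \Pi_{{\bf R}}(X'; \{X, C\}),
\end{equation}
that is, controllability $=$ SITE $+$ SDTE. Since $I(X';C|X)$ is maximal exactly at perfect controllability, and perfect controllability means $X'$ is a deterministic function of $(X,C)$ that can be steered to every target (so $H(X'|X,C)=0$), I would reduce both sides of the claimed equivalence to a single structural condition: that $X'$ is a function of $C$ alone, $X' = f(C)$. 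The whole argument then splits into showing (i) open-loop $+$ perfect control $\Rightarrow X' = f(C) \Rightarrow$ SDTE $=0$, and (ii) SDTE $=0$ $+$ perfect control $\Rightarrow X' = f(C) \Rightarrow$ an open-loop policy achieves perfect control.

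For direction (i), I would first note that open-loop perfect controllability forces $X' = f(C)$: because the controller cannot read $X$, steering every initial state deterministically to a prescribed target $x'$ requires a single control value that sends all $x$ to $x'$, so the final state depends on $C$ alone. Substituting this into the synergy term, $H(X'|X,C)=0$ gives $I(X'; X, C) = H(X')$, while $X'=f(C)$ makes the specific information $I(X'=x';C)$ equal to the full surprise $\log(1/p(x'))$ for every $x'$, so $I_{\max}(X'; X, C) = \sum_{x'} p(x') \log(1/p(x')) = H(X')$. Hence $\Pi_{{\bf R}}(X'; \{X, C\}) = I(X'; X, C) - I_{\max}(X'; X, C) = 0$, i.e.\ the transfer is purely state-independent.

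For direction (ii), I would run this computation in reverse. Perfect controllability again gives $I(X'; X, C) = H(X')$, and since each specific information is bounded above by its surprise, $I_{\max}(X'; X, C) \le \sum_{x'} p(x') \log(1/p(x')) = H(X')$. The hypothesis SDTE $=0$ forces $I_{\max}(X'; X, C) = H(X')$, which holds only if, for each $x'$, at least one of $I(X'=x'; X)$ or $I(X'=x'; C)$ saturates the bound $\log(1/p(x'))$. The key step is to eliminate $X$: under perfect controllability every target is reachable from every initial state, so from a fixed $x$ the final state still varies with the control and $p(x'|x) < 1$; hence $X$ never carries the full surprise about any $x'$. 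The maximum must therefore be attained by $C$ for every state, which is exactly $X' = f(C)$. Given this, selecting for each target $x'$ a control in $f^{-1}(x')$ independently of $X$ yields an open-loop policy ($I(X;C)=0$) that still reaches every state deterministically, closing the equivalence.

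The main obstacle is the specific-information bookkeeping in direction (ii): converting the scalar identity $I_{\max} = H(X')$ into the pointwise conclusion that $C$, and never $X$, supplies the full surprise about each final state, while correctly handling states of probability zero and the distinction between controllability as an existential property of the dynamics $p(x'|x,c)$ and the fixed joint distribution on which the PI-terms are evaluated. Establishing the clean characterization ``perfect controllability $+$ SDTE $=0 \Leftrightarrow X'=f(C)$'' is where the real work lies; once it is in place, both implications and the open-loop construction follow immediately.
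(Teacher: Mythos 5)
Your architecture closely tracks the paper's own proof: the paper also pivots on the condition ``for each $x'$ there exists $c$ with $p(x'|c)=1$'' (its Lemma~2), characterizes $\mathrm{SDTE}=0$ pointwise in $x'$ (its Lemma~3, written via specific conditional informations and KL divergences rather than your $I_{\max}$-saturation bound, but equivalent given $H(X'|X,C)=0$), and uses reachability to rule out the $X$-branch exactly as you do. The problem is that the step you explicitly defer---converting the saturation identity into the statement that a single control value steers \emph{every} initial state to $x'$, and then into an open-loop policy---is not bookkeeping; it is the crux, and as sketched it fails.

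In direction (ii), the saturation argument only yields a fact about the joint distribution induced by the closed-loop policy being analyzed: $p(x'|c)=1$ for every $c\in\operatorname{supp}(C\mid X'=x')$, where $p(x'|c)=\sum_x p(x|c)\,p(x'|x,c)$ and $p(x|c)$ vanishes at every initial state from which that policy never deploys $c$. Your final move---``selecting for each target $x'$ a control in $f^{-1}(x')$ independently of $X$ yields an open-loop policy that reaches every state deterministically''---needs the dynamics-level statement $p(x'|x,c)=1$ for \emph{all} $x$, which does not follow. Concretely: let $X=X'=\{0,1\}$, $C=\{a,b,a',b'\}$, with $a,b'$ acting as the identity and $b,a'$ acting as NOT, and take the policy that uses $a$ or $b$ (equiprobably) from $x=0$ and $a'$ or $b'$ from $x=1$, with $p(x)$ uniform. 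Then $H(X'|X,C)=0$, every $x'$ is reachable from every $x$, and under the joint distribution $X'$ is a deterministic function of $C$ (namely $a,a'\mapsto 0$ and $b,b'\mapsto 1$), so your computation gives $I_{\max}(X';X,C)=H(X')$ and $\mathrm{SDTE}=0$; yet the dynamics contains only identity and NOT gates, so no open-loop policy can deterministically reach a prescribed target from both initial states. Thus ``perfect controllability $+$ $\mathrm{SDTE}=0 \Rightarrow X'=f(C) \Rightarrow$ open-loop controllable'' breaks at the second arrow if $X'=f(C)$ is read at the level of the joint (which is all your saturation argument delivers), while the first arrow is what fails if it is read at the level of the dynamics. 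To be fair, the paper's own proof glosses the identical point: its Lemma~3 asserts $p(x'|x,c)=p(x'|c)$ ``for all $x,c$,'' but the KL-divergence argument only controls pairs of positive probability, and its Lemma~2 needs the all-$x$ version. So you have correctly located the obstacle, but locating it is not closing it: bridging the joint-level and dynamics-level statements requires either additional hypotheses on the witnessing policy (e.g., support conditions) or a genuinely new argument, and the proposal contains neither.
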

Thus, decomposing $I(X';C|X)$ as in Equation \eqref{TEDecompEq}, SITE from $C$ to $X'$ measures a system's open-loop controllability (maximal for perfect open-loop control), while SDTE measures the additional contribution from closed-loop control. More generally, this connection grounds the interpretation of SITE as the state-independent (open-loop) influence of one process on another, and likewise for SDTE and state-dependent (closed-loop) influence.

\begin{figure}[t!]
  \centering
    \includegraphics[width=0.55\linewidth]{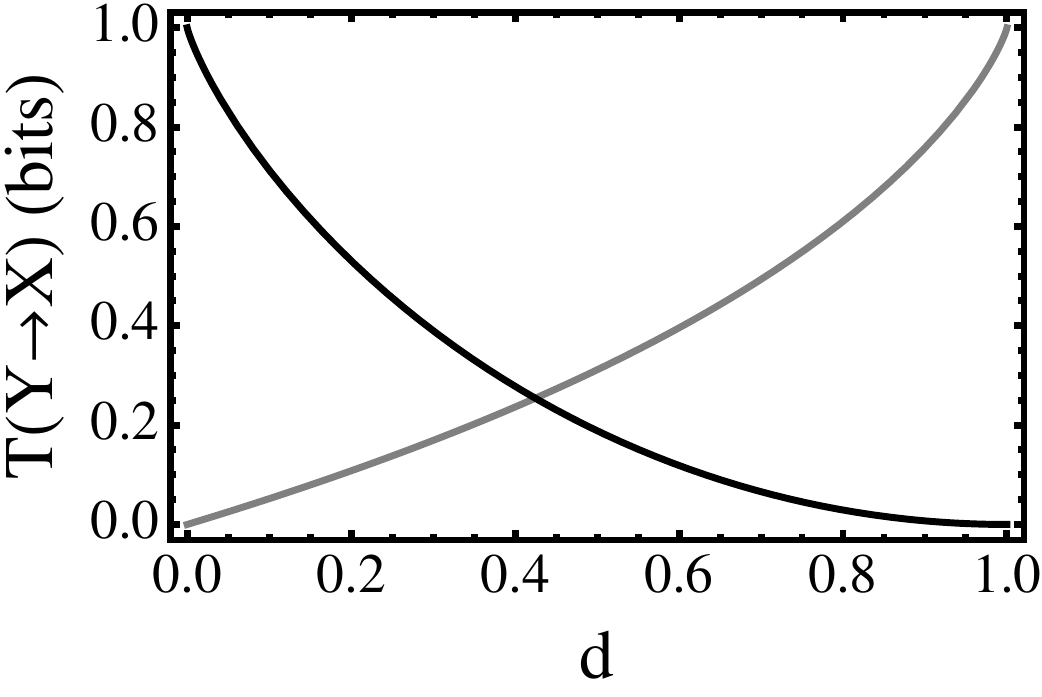}
\caption{SITE (black) and SDTE (gray) for binary Markov processes $X$ and $Y$ as a function of the coupling parameter $d$.}
\label{SDTEvsSITE}
\end{figure}

As a simple example to illustrate the two kinds of transfer, consider two binary state Markov processes $X$ and $Y$, where $Y$ is purely random and $X$ is stochastically coupled to $Y$.  Specifically, if $x_t = 0$, then $x_{t+1} = y_{t}$, while if $x_t = 1$, the probability that $x_{t+1} = y_t$ is $1-d$ and that $x_{t+1} = 1-y_t$ is $d$.  A simple eigenvector calculation yields the stationary distribution $p(x,y)=1/4$ for all $x$ and $y$, and from this all informational quantities can be computed.  When $d=0$, $X_{t+1}$ is simply set to $Y_t$ regardless of its own previous state, thus corresponding to pure SITE (FIG.~\ref{SDTEvsSITE}).  With this parameter setting, the system is essentially equivalent to the discrete example considered in \cite{Kaiser2002}.  In contrast, when $d=1$, $y_t=0$ causes $X$ to remain in the same state and $y_t = 1$ causes $X$ to switch states.  Consequently, $Y$'s influence on $X_{t+1}$ depends entirely on $X_t$, corresponding to pure SDTE.  In fact, if one imagines using $Y$ to control $X$, then $d=1$ corresponds to a `controlled-NOT' gate, which is known to require closed-loop control \cite{Touchette2000,Touchette2004}.  FIG.~\ref{SDTEvsSITE} shows how varying $d$ produces a smooth transition between these two extremes.

Finally, we note that the distinction between SITE and SDTE also clarifies the relationship between TE and the time-delayed mutual information (TDMI) $I(X_{t+1}; Y_t)$, which was the standard measure of information transfer prior to TE \cite{Schreiber2000}.  Transfer entropy was initially proposed as an alternative to TDMI because the latter fails to remove shared information due to common histories or inputs.  From FIG.~\ref{te-pi-decomposition}, it is clear that this shared information corresponds to $\Pi_{{\bf R}}(X_{t+1}; \{X_t\}\{Y_t\})$, the redundancy between $X_t$ and $Y_t$.  However, FIG.~\ref{te-pi-decomposition} also reveals a second crucial difference between TDMI and TE, which is that TDMI fails to include SDTE.  Thus, not only does TDMI incorrectly add in shared information, but it also leaves out a significant component of information transfer.

\begin{figure}[b!]
  \centering
    \includegraphics[width=0.75\columnwidth]{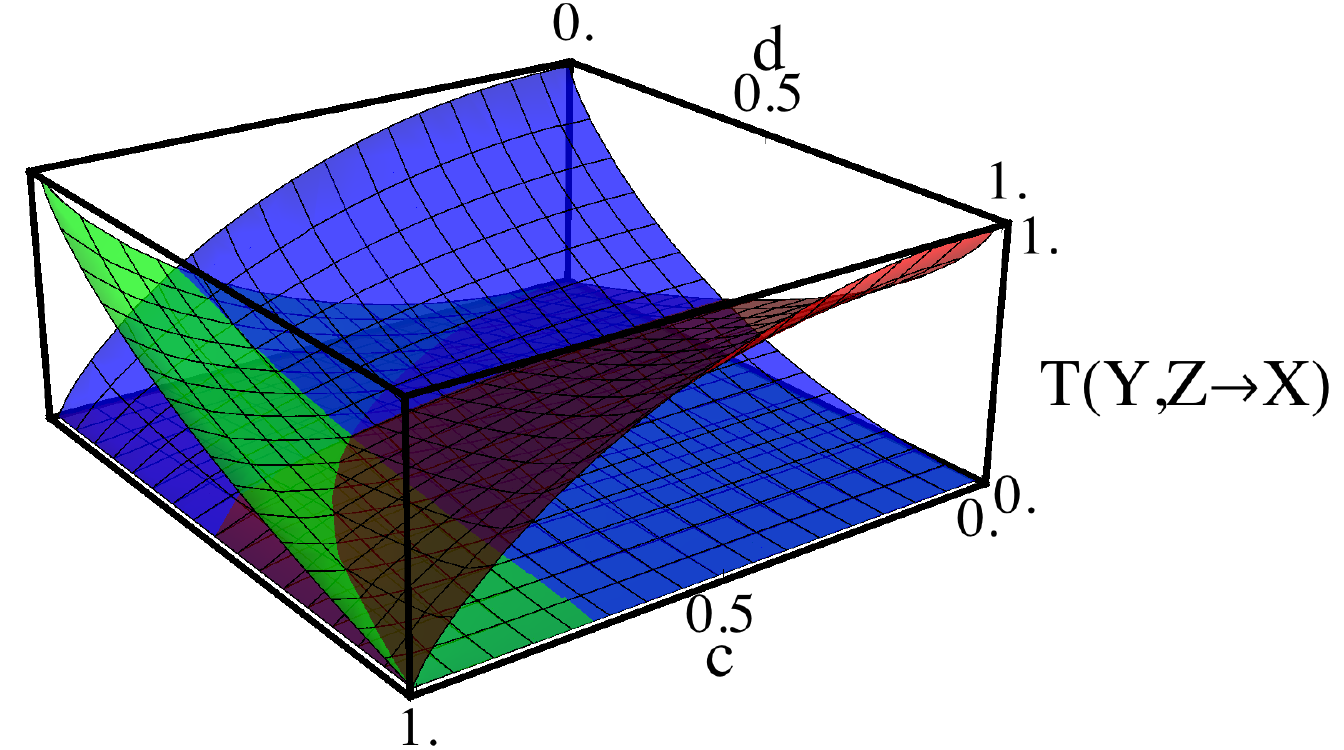} 
\caption{$T_{Y \rightarrow X \backslash Z}$ (blue), $T_{\{Y\}\{Z\} \rightarrow X}$ (green), and $T_{\{Y, Z\} \rightarrow X}$ (red) for binary Markov processes X, Y, and Z.}
\label{mvte_plot}
\end{figure}

\section{Multivariate Information Transfer}

Our second main result is a novel multivariate generalization of TE, based on applying PI-decomposition to the information from multiple sources.  Schreiber \cite{Schreiber2000} originally proposed a generalization of TE based on `conditioning out' other sources, an idea that has since been adopted and extended by others \cite{Frenzel2007,*Lizier2008}. However, it should be clear from the preceding discussion that such a generalization is problematic, since conditioning does not simply remove shared information.  Our generalization addresses this deficiency by quantifying separately the unique, redundant, and synergistic transfer from multiple sources.

For simplicity, we consider only two sources $Y$ and $Z$ acting on a target $X$ (the general case is discussed momentarily), in which case the total TE is given by $T_{Y,Z \rightarrow X} = I(X_{t+1}; Y_t, Z_t | X_t)$.    
Applying PI-decomposition as before, we arrive at measures for the \emph{redundant transfer} from $Y$ and $Z$: $T_{\{Y\}\{Z\} \rightarrow X} = I_{\min}(X_{t+1}; Y_t, Z_t | X_t)$; the \emph{unique transfer} from Y (resp. Z): $T_{Y \rightarrow X \backslash Z} = T_{Y \rightarrow X} - T_{\{Y\}\{Z\} \rightarrow X}$; and the \emph{synergistic transfer} from $Y$ and $Z$: $T_{\{Y, Z\} \rightarrow X} = T_{Y,Z \rightarrow X} - I_{\max}(X_{t+1}; Y_t, Z_t | X_t)$.  Generally speaking, redundant transfer corresponds to situations where the apparent influence from multiple sources may in fact be due to any one (or several) of them, indicating that interventional methods are required to determine the true causal structure \cite{Ay2008}.  In contrast, unique transfer represents the portion of a source's influence that can only come from that source, or, if all possible sources are considered, that \emph{must} come from that source.  Finally, synergistic transfer indicates that several sources act together cooperatively to influence the target.

To illustrate, consider three binary state Markov processes $X$, $Y$, and $Z$.  $Y$ is purely random, and $Z$ is stochastically coupled to $Y$ such that $z_t = y_t$ with probability $(1+c)/2$ and $z_t = 1-y_t$ with probability $(1-c)/2$.  This coupling can be thought of as an external signal driving $Y$ and $Z$ to synchronize:  as $c$ goes from $0$ to $1$, $Y$ and $Z$ transition from independence to complete synchronization. $X$ in turn is coupled to both $Y$ and $Z$ such that, if $z_t = 0$, $x_{t+1} = y_t$, while if $z_t = 1$, $x_{t+1} = y_t$ with probability $(1-d)$ and $x_{t+1} = (1-y_t)$ with probability $d$.    Thus, $x_{t+1} = y_t$ when $d=0$, and $x_{t+1} = y_t \oplus z_t$ when $d=1$.  At the extreme parameter settings, this system exhibits three different behaviors (FIG.~\ref{mvte_plot}).   With $(c=0, d=0)$, $Y$ and $Z$ are independent and $X$ depends only on $Y$, so the only influence is unique transfer from $Y$ to $X$.  In contrast, with $(c=1, d=0)$, $X$ again depends only on $Y$ but $Y$ and $Z$ are now synchronized, so there is only redundant transfer from $Y$ and $Z$.  Indeed, in this case it is impossible to determine from observation alone whether $Y$ or $Z$ (or both) is driving $X$.  Finally, with $(c=1, d=0)$, $Y$ and $Z$ are independent and $X_{t+1} = Y_t \oplus Z_t$, corresponding to pure synergistic transfer.

\begin{figure}[t]
  \centering
    \includegraphics[width=0.55\columnwidth]{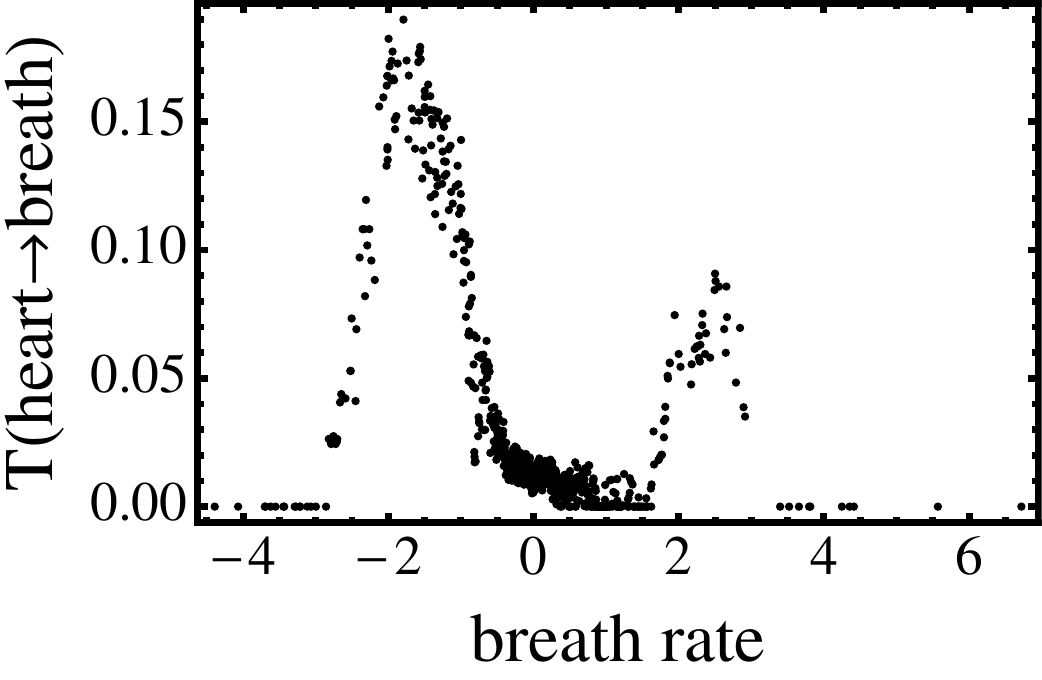} 
\caption{Information transfer $I(B_{t+1}; H_t |B_t = b_t)$ from heart rate ($H$) to breath rate ($B$) as a function of $b_t$ for bandwidth $r = 0.5$. Qualitatively similar results were found for $r \in [0.2,1.0]$.}
\label{physio_state_dep_plot}
\end{figure}

As a final example, we extend the analysis of a multivariate physiological time series presented in \cite{Schreiber2000, Kaiser2002}.  The data consists of simultaneous recordings of the breath rate (chest volume), heart rate, and blood oxygen concentration for a patient suffering from sleep apnea.  Previous analysis compared TE and TDMI for both directions between the breath and heart signals.  However, directly comparing TE and TDMI is problematic and potentially misleading, since both measures detect SITE but differ regarding SDTE and shared information (FIG.~\ref{te-pi-decomposition}). Indeed, with no additional information, it is impossible to determine even whether TE and TDMI are detecting the same or different aspects of an interaction.

\begin{figure}[b!]
  \centering
    \includegraphics[width=0.55\columnwidth]{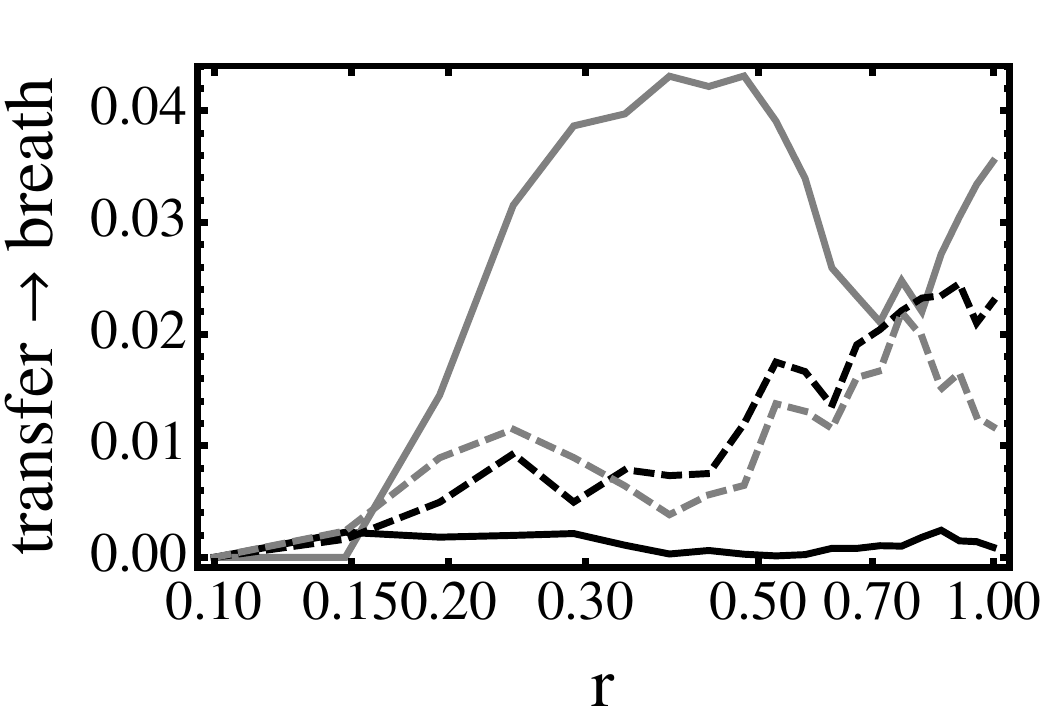} 
\caption{$T_{O \rightarrow B \backslash H}$ (black), $T_{H \rightarrow B \backslash O}$ (gray), $T_{\{H\}\{O\} \rightarrow B}$ (dashed black), and $T_{\{H, O\} \rightarrow B}$ (dashed gray) from heart rate (H) and blood oxygen (O) to breath rate (B).}
\label{physio_mvte_plot}
\end{figure}

To address this issue, we calculated SITE and SDTE between the breath and heart signals.  Joint probability estimates were obtained by kernel estimation using a rectangular kernel with bandwidth $r$.  Neighboring points closer than $20$ time steps were excluded and points with fewer than $5$ neighbors were ignored, following the suggestions of Schreiber and others \cite{Schreiber2000, Kaiser2002, Pahle2008}. Our main finding is that SITE is consistent with zero in both directions between the breath and heart signals.  Thus, for these signals, TE and TDMI in fact quantify entirely separate things: TDMI is due only to shared information from common histories or inputs, while TE detects state-dependent information exchange.  This state dependence can be seen by plotting information transfer as a function of the target state, shown in FIG~\ref{physio_state_dep_plot} for $T(\text{heart} \to \text{breath})$.  For pure SITE, this plot would be uniform across target states, while FIG.~\ref{physio_state_dep_plot} shows a clear bimodal distribution.  These two modes correspond to downswings and upswings in chest volume, suggesting that heart rate has the largest influence on respiration when chest volume is low and, to a lesser extent, when it is high, but minimal influence when chest volume is near its mean (see also FIG.~\ref{supp_physio_sdte_plot} in Appendix~\ref{app:fig}).   

We also analyzed the combined influence of heart rate and blood oxygen level on breath rate (FIG.~\ref{physio_mvte_plot}).  The most significant component is consistently the unique information transfer from the heart rate, indicating that most of the TE discussed above is uniquely attributable to the heart signal.  However, there is also considerable redundant and synergistic transfer, of roughly comparable magnitude, from heart rate and blood oxygen concentration. In contrast, there is essentially no unique information transfer from blood oxygen concentration, indicating that all of its apparent influence could also be due to heart rate.

We conclude by noting that the multivariate generalization described here extends naturally to any number of sources, simply by applying the general form of PI-decomposition \cite{Williams2010}. The two extensions described in this Letter can also be applied in conjunction, allowing one to quantify, e.g., state-dependent synergistic transfer.  Thus, together these methods provide a completely general framework for characterizing information exchange in complex systems.

\appendix

\section{Proof of Theorem 1}\label{app:proofs}

As defined in \cite{Touchette2004}, a system has \emph{perfect controllability} iff for any initial state $x$ and final state $x'$ there exists a controller state $c$ such that $p(x'|x, c) = 1$.  We first prove that an equivalent definition of perfect controllability is that a system can be moved deterministically to any final state from any distribution of initial states.
\begin{mylma}
A system is perfectly controllable iff for any $x'$ there exists a distribution $p(c|x)$ such that $p(x')=1$ for any distribution $p(x)$.
\end{mylma}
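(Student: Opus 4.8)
The plan is to prove the two implications connecting the original definition of perfect controllability (for every pair $x$, $x'$ there exists a controller state $c$ with $p(x'|x,c)=1$) with the restatement in the lemma. Throughout, the marginal probability of reaching final state $x'$ under a policy $p(c|x)$ and an initial distribution $p(x)$ is $p(x') = \sum_{x,c} p(x'|x,c)\,p(c|x)\,p(x)$, and I would open by recording this expression, since both directions reduce to manipulating this single sum.

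For the forward direction, I would fix a target $x'$ and invoke the original definition to select, for each initial state $x$, a controller state $c_x$ satisfying $p(x'|x,c_x)=1$. Defining the deterministic policy $p(c|x)=1$ when $c=c_x$ and $0$ otherwise, substitution collapses the inner sum over $c$ and yields $p(x') = \sum_x p(x)\,p(x'|x,c_x) = \sum_x p(x) = 1$, which holds for every initial distribution $p(x)$. This establishes the condition in the lemma.

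For the converse, I would take the policy $p(c|x)$ guaranteed by the lemma and specialize the initial distribution to a point mass at an arbitrary state $x_0$, i.e.\ $p(x)=1$ for $x=x_0$. The guarantee then reads $\sum_c p(x'|x_0,c)\,p(c|x_0) = 1$. Since $p(c|x_0)$ is a probability distribution and each $p(x'|x_0,c)\le 1$, this is a convex combination of numbers in $[0,1]$ equal to $1$; hence every $c$ in the support of $p(c|x_0)$ must satisfy $p(x'|x_0,c)=1$. Choosing any such $c$ recovers the original definition for the pair $(x_0,x')$, and since $x_0$ and $x'$ were arbitrary, the equivalence follows.

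The only genuine subtlety — the step I would flag as the crux — is the convex-combination argument in the converse: concluding that each active term equals $1$ from the fact that their weighted average, with weights summing to $1$, equals $1$. Everything else is bookkeeping on the marginalization formula, so I would keep the exposition brief once this observation is in place.
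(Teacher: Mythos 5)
Your proof is correct and follows essentially the same route as the paper's: the forward direction restricts the policy's support to controller states achieving $p(x'|x,c)=1$ (the paper allows the full set of such $c$, you pick one $c_x$ per state, an immaterial difference), and the converse specializes to point-mass initial distributions and uses the convex-combination argument to extract a $c$ with $p(x'|x_0,c)=1$, which is exactly the step the paper states tersely as ``thus for each $x$ there must be at least one $c$ for which $p(x'|x,c)=1$.'' No gaps; your explicit flagging of the averaging argument is a clean articulation of what the paper leaves implicit.
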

\begin{proof}
If a system is perfectly controllable, we know that for a given $x'$ there exists at least one $c$ for each $x$ such that $p(x'|x, c) = 1$.  Thus, we can choose
\begin{equation}
\operatorname{supp}(C|x) = \{c:p(x'|x,c)=1\}
\end{equation}
for each $x \in X$, which guarantees that $p(x') = 1$ for any distribution $p(x)$.  As this is verified for any $x'$, this proves the direct part of the theorem.

Conversely, note that if $p(x') = 1$ for a given $x'$ and any distribution $p(x)$, it must be that
\begin{equation}
p(x'|x) = \sum_c p(x'|x, c) p(c|x) = 1
\end{equation} 
for each $x \in X$, and thus for each $x$ there must be at least one $c$ for which $p(x'|x, c) = 1$.  As this holds for any $x'$, the converse is proven.
\end{proof}

In order for a system to be perfectly controllable via open-loop control, we also require that the controller acts independently of the initial state, leading to the following definition.
\begin{mydef}
A system has \emph{perfect open-loop controllability} iff for any $x'$ there exists a distribution $p(c|x)$ such that $p(x')=1$ for any distribution $p(x)$, and $I(X;C)=0$.
\end{mydef}

An alternative definition of perfect open-loop controllability is given by the following lemma.

\begin{mylma}
A system has perfect open-loop controllability iff for any $x'$ there exists a $c$ such that $p(x'|c) = 1$.
\end{mylma}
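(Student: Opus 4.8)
The plan is to show that Definition~1 is equivalent, for each target $x'$, to the existence of a single controller value $c$ that drives the system deterministically to $x'$ irrespective of the initial state, i.e. $p(x'|x,c)=1$ for \emph{every} $x$, and then to identify this ``uniform determinism'' with the marginal condition $p(x'|c)=1$. The bridge in both directions is the fact that the open-loop constraint $I(X;C)=0$ is equivalent to $C$ being independent of $X$, so that $p(c|x)=p(c)$ and $p(x|c)=p(x)$.

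For the forward direction I would suppose the system has perfect open-loop controllability and fix $x'$ together with a witnessing policy $p(c|x)$ satisfying $p(x')=1$ for every $p(x)$ and $I(X;C)=0$. Independence gives $p(c|x)=p(c)$, so $p(x')=\sum_x\big[\sum_c p(x'|x,c)p(c)\big]p(x)$. Running the same point-mass argument used in the converse of Lemma~1 --- evaluating at $p(x)=\delta_{x,x_0}$ for each $x_0$ --- forces $\sum_c p(x'|x,c)p(c)=1$ for every $x$. Since each $p(x'|x,c)\le 1$ and $\sum_c p(c)=1$, this in turn forces $p(x'|x,c)=1$ for every $x$ and every $c$ with $p(c)>0$. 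Choosing any such $c$ and using $p(x|c)=p(x)$ then yields $p(x'|c)=\sum_x p(x'|x,c)p(x)=1$, as required.

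For the reverse direction, suppose that for the given $x'$ there is a $c^\ast$ with $p(x'|c^\ast)=1$. I would take the deterministic open-loop policy $p(c|x)=\delta_{c,c^\ast}$, which is manifestly state-independent and hence satisfies $I(X;C)=0$. Under this policy $p(x'|x)=p(x'|x,c^\ast)$, so $p(x')=\sum_x p(x'|x,c^\ast)p(x)$, and it remains only to show this equals $1$ for every $p(x)$, i.e. that $p(x'|x,c^\ast)=1$ for every $x$. This follows by unfolding $1=p(x'|c^\ast)=\sum_x p(x'|x,c^\ast)p(x|c^\ast)$ and invoking independence $p(x|c^\ast)=p(x)$, which brings us back to Definition~1.

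The main obstacle is precisely this last inference: passing from the aggregate statement $p(x'|c^\ast)=1$ to the per-state statement $p(x'|x,c^\ast)=1$ for all $x$. A convex combination of numbers in $[0,1]$ equals $1$ only if every term carrying positive weight equals $1$, so the deduction is valid exactly when the reference initial distribution has full support ($p(x)>0$ for all $x$), which is the natural assumption in the controllability setting of \cite{Touchette2004}; I would state this support assumption explicitly rather than leave it implicit. Granting it, the equivalence $p(x'|c)=1 \Leftrightarrow p(x'|x,c)=1$ for all $x$ holds, and the two directions above close the proof.
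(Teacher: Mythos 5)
Your proof is correct and takes essentially the same route as the paper's: the identical factorization-plus-point-mass argument in the forward direction, and the identical constant (state-independent) policy construction in the converse. The only divergence is the full-support caveat you flag at the end: the paper needs no such assumption because it reads $p(x'|c)=1$ as shorthand for the plant property $p(x'|x,c)=1$ for all $x$ (this is the ``i.e.'' step in its forward direction), under which reading the converse is immediate; your alternative of treating $p(x'|c)$ as a genuine conditional under a full-support reference distribution is an equally valid way to close the same argument.
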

\begin{proof}
If $I(X; C)=0$, then $p(x')$ can be written as
\begin{equation}
p(x') = \sum_c p(c) \sum_x p(x) p(x'|x, c).
\end{equation}
If in addition we have that $p(x')=1$ for a given $x'$ and any distribution $p(x)$, then there must exist a $c$ for which $p(x'|x, c)=1$ for all $x$, i.e., $p(x'|c)=1$.  This holds for any $x'$, so the direct part of the theorem is proven.

Conversely, if for any $x'$ there exists a $c$ such that $p(x'|c)=1$, then for a given $x'$ we can choose
\begin{equation}
\operatorname{supp}(C) = \{c:p(x'|c)=1\}
\end{equation}
with $p(c)=p(c|x)$ for all $c$ and $x$, ensuring that $p(x')=1$ and $I(X;C)=0$.  As this holds for any $x'$, the converse is proven.
\end{proof}

In \cite{Touchette2004}, it is shown that an equivalent information-theoretic definition of perfect controllability is that there exists a distribution $p(c|x)$ such that each final state is reachable from each initial state, i.e.,
\begin{equation}\label{reachability}
p(x'|x) \neq 0 
\end{equation}
for all $x$ and $x'$ and that, for any distribution $p(x)$, $I(X';C|X)$ is maximal, i.e.,
\begin{equation}\label{determ-trans}
H(X'|X, C) = 0
\end{equation} 
so that
\begin{equation}
I(X';C|X) = H(X'|X) - H(X'|X, C) = H(X'|X).
\end{equation} 
Consequently, $I(X'; C|X)$ is naturally interpreted as a system's degree of controllability, which is maximal iff the system is perfectly controllable.

By the same reasoning, Theorem 1 establishes that the SITE from $C$ to $X'$ is a natural measure of a system's open-loop controllability, maximal exactly in the case of perfect open-loop control.  In order to prove Theorem 1, we will need the following basic property of SDTE.
\begin{mylma}
The SDTE from $C$ to $X'$ is zero iff for each $x' \in X'$,
\begin{align*}
p(x'|x, c) & = p(x'|x) \mbox{ } \forall x, c \\
& \mbox{or} \\
p(x'|x, c) & = p(x'|c) \mbox{ } \forall x, c.
\end{align*}
\end{mylma}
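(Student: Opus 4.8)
The plan is to write the SDTE from $C$ to $X'$---the synergy term $\Pi_{{\bf R}}(X'; \{X, C\})$ with ${\bf R} = \{X, C\}$---as a weighted sum of per-state contributions, show each contribution is nonnegative, and then read off when they all vanish. Using the definition of $I_{\max}$ together with the averaging identity $I(X'; R) = \sum_{x'} p(x')\, I(X'=x'; R)$, one has
\begin{equation}
\Pi_{{\bf R}}(X'; \{X, C\}) = \sum_{x'} p(x') \big[\, I(X'=x'; X, C) - \max_{R} I(X'=x'; R) \,\big].
\end{equation}
Since $p(x') \geq 0$, once each bracketed term is known to be nonnegative the SDTE vanishes iff every bracketed term vanishes.

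The crucial step is a monotonicity estimate for specific information: for each $x'$ and each $R \in {\bf R}$ one has $I(X'=x'; X, C) \geq I(X'=x'; R)$. For $R = X$, a short calculation using $p(x'|x) = \sum_c p(c|x)\, p(x'|x,c)$ gives
\begin{equation}
I(X'=x'; X, C) - I(X'=x'; X) = \sum_{x,c} p(x,c|x') \log \frac{p(x'|x,c)}{p(x'|x)}.
\end{equation}
Grouping the summand by $x$, the contribution of each $x$ has the form $\sum_c w_c a_c \log(a_c/\bar a)$ with weights $w_c = p(c|x)$, values $a_c = p(x'|x,c)$, and mean $\bar a = \sum_c w_c a_c = p(x'|x)$; by convexity of $t \mapsto t \log t$ (Jensen's inequality) this is nonnegative, and it is zero iff $a_c = \bar a$ whenever $w_c > 0$. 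Hence $I(X'=x'; X, C) = I(X'=x'; X)$ iff $p(x'|x,c) = p(x'|x)$ for all relevant $x, c$, and interchanging the roles of $X$ and $C$ gives $I(X'=x'; X, C) = I(X'=x'; C)$ iff $p(x'|x,c) = p(x'|c)$ for all relevant $x, c$.

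With monotonicity established, the joint specific information dominates both marginals and hence their maximum, so each bracketed term is nonnegative and SDTE $= 0$ iff $I(X'=x'; X, C) = \max_{R} I(X'=x'; R)$ for every $x'$. Because the joint quantity already dominates each of $I(X'=x'; X)$ and $I(X'=x'; C)$, equality with their maximum forces equality with the larger one, which by the equality conditions above is equivalent to $p(x'|x,c) = p(x'|x)$ for all $x, c$ or $p(x'|x,c) = p(x'|c)$ for all $x, c$; conversely, either factorization collapses the joint specific information onto a marginal that then equals the maximum, making the term vanish. This yields the stated equivalence. I expect the main obstacle to be the convexity estimate together with the careful treatment of the $\max$---ensuring that equality with the maximum is genuinely equivalent to one of the two clean factorizations rather than to some weaker intermediate condition---while the remainder is bookkeeping with the averaging identity (and a routine check that the support qualifications match the statement under the reachability assumption of Eq.~\eqref{reachability}).
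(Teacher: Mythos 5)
Your proof is correct, and it shares the paper's skeleton: both expand the SDTE as $\sum_{x'} p(x')\bigl[I(X'=x';X,C) - \max_{R} I(X'=x';R)\bigr]$, argue that each bracketed term is nonnegative, and characterize when all of them vanish. The difference lies in how the per-state nonnegativity and equality conditions are obtained. The paper uses the chain rule for specific information to rewrite each bracket as $\min\{I(X'=x';C|X),\, I(X'=x';X|C)\}$, represents $I(X'=x';C|X)$ as an average of Kullback--Leibler divergences $\sum_x p(x|x')\, D\bigl(p(c|x,x') \parallel p(c|x)\bigr)$, and invokes the standard facts that $D \geq 0$ with equality iff its arguments coincide; it then needs a final Bayes flip to convert conditions on $p(c|x,x')$ into the stated conditions on $p(x'|x,c)$. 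You instead prove the monotonicity $I(X'=x';X,C) \geq I(X'=x';R)$ directly by a Jensen (log-sum) computation on $t \mapsto t\log t$, which re-derives rather than quotes KL nonnegativity and works with the transition probabilities $p(x'|x,c)$ throughout, so the equality condition emerges already in the form the lemma asserts. Your treatment of the max---equality with the max forces equality with the dominating marginal, and conversely either factorization collapses the joint specific information onto a marginal that must then be the max---is exactly right, and the support caveats you flag (positivity of $p(c|x)$, $p(x)$, $p(x')$) are glossed over in the same harmless way by the paper. In short: same decomposition and same logical structure, with your version slightly more elementary and self-contained in the key inequality, and the paper's version more compact because it leans on known properties of conditional specific information and KL divergence.
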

\begin{proof}
\begin{align*}
& \Pi_{{\bf R}}(X'; \{ X, C \}) \\
= & I(X'; X, C) - I_{\max}(X'; X, C) \\
= & \sum_{x'} p(x') \Big[I(X'=x'; X, C) - \\
   & \max \{I(X'=x'; X), I(X'=x'; C)\}\Big] \\
= & \sum_{x'} p(x') \min \{ I(X'=x'; C|X), I(X'=x'; X|C) \}
\end{align*}
where
\begin{equation*}
I(X'=x'; C|X) = \sum_x p(x|x') D(p(c|x, x') \parallel p(c|x))
\end{equation*}
and $D(\cdot \parallel \cdot)$ is the Kullback-Leibler divergence \cite{Cover2006}. $D(\cdot \parallel \cdot)$ is nonnegative, so $\Pi_{{\bf R}}(X'; \{ X, C \}) = 0$ iff, for each $x' \in X'$, $I(X'=x'; C|X) = 0$ or $I(X'=x'; X|C) = 0$.  Furthermore, since $D(q \parallel r) = 0$ iff $q=r$, $\Pi_{{\bf R}}(X'; \{ X, C \}) = 0$ iff, for each $x' \in X'$,
\begin{align*}
p(c|x, x') & = p(c|x) \mbox{ } \forall x, c \\
& \mbox{or} \\
p(x|c, x') & = p(x|c) \mbox{ } \forall x, c
\end{align*}
or, equivalently,
\begin{align*}
p(x'|x, c) & = p(x'|x) \mbox{ } \forall x, c \\
& \mbox{or} \\
p(x'|x, c) & = p(x'|c) \mbox{ } \forall x, c.
\end{align*}
\end{proof}

Now we are in a position to prove Theorem 1.

\begin{proof}
We will show that a system has perfect open-loop controllability iff there exists a distribution $p(c|x)$ such that Eqs. \eqref{reachability} and \eqref{determ-trans} are satisfied and there is no SDTE from $C$ to $X'$,
\begin{equation}
\Pi_{{\bf R}}(X'; \{ X, C \}) = 0.
\end{equation}
If a system is open-loop controllable, then for each $x'$ there exists a $c$ such that $p(x'|c) = 1$.  Choosing
\begin{equation}
\operatorname{supp}(C) = \{c:p(x'|c)=1\}
\end{equation}
over all $x' \in X'$, with $p(c)=p(c|x)$ for all $c$ and $x$, ensures that $H(X'|X, C) \leq H(X'|C) = 0$ and $p(x'|x) \neq 0$.  Also, the chosen distribution $p(c|x)$ ensures that $p(x'|x, c) = p(x'|c) \mbox{ } \forall x, c$ so that $\Pi_{{\bf R}}(X'; \{ X, C \}) = 0$ by the preceding lemma.  This proves the direct part of the theorem.

\begin{figure*}[t!]
  \centering
    \includegraphics[width=0.7\textwidth]{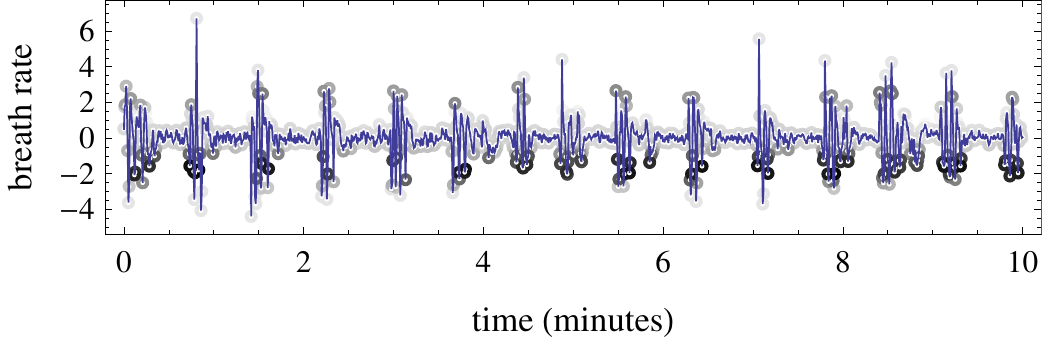}
\caption{Information transfer $I(B_{t+1}; H_t |B_t = b_t)$ from heart rate (H) to breath rate (B) superimposed on the breath rate signal.  Darker colors correspond to higher values of $I(B_{t+1}; H_t |B_t = b_t)$.  Heart rate has the largest influence on breath rate when chest volume is low and, to a lesser extent, when it is high, but minimal influence when chest volume is near its mean.}
\label{supp_physio_sdte_plot}
\end{figure*}

For the converse, note that $p(x'|x) \neq 0$ for a given $x'$ and $x$ means that there is at least one $c$ for which $p(x'|x, c) \neq 0$ and, since $H(X'|X, C) = 0$, we can further conclude that $p(x'|x, c) = 1$.  If we also have that $\Pi_{{\bf R}}(X'; \{ X, C \}) = 0$, we know that, for each $x' \in X'$,
\begin{align*}
p(x'|x, c) & = p(x'|x) \mbox{ } \forall x, c \\
& \mbox{or} \\
p(x'|x, c) & = p(x'|c) \mbox{ } \forall x, c
\end{align*}
and thus that, for each $x' \in X'$, 
\begin{align*}
\exists x & , p(x'|x) = 1 \\
& \mbox{or} \\
\exists c &, p(x'|c) = 1.
\end{align*}
But it cannot be the case that $\exists x, p(x'|x) = 1$, since that would violate the reachability condition (Eq. \eqref{reachability}), so we conclude that for each $x'$ there exists a $c$ such that $p(x'|c) = 1$.  This proves the converse.
\end{proof}

\section{State-Dependent Influence on Breath Rate}\label{app:fig}

As mentioned in the main text, we found that heart rate has an exclusively state-dependent influence on breath rate.  This is shown most clearly by superimposing the information transfer values on the breath rate signal (FIG.~\ref{supp_physio_sdte_plot}).

\begin{acknowledgments}
We thank J. Beggs, V. Griffith, A. Kolchinsky, and O. Sporns.  This work was supported in part by NSF grant IIS-0916409 (to R.D.B.) and an IGERT traineeship (to P.L.W.).
\end{acknowledgments}

\bibliography{pi+te_arxiv}

\end{document}